\newtheorem{myDef}{Definition}
\newtheorem{myTheo}{Theorem}
\definecolor{codegreen}{rgb}{0,0.6,0}
\definecolor{codegray}{rgb}{0.5,0.5,0.5}
\definecolor{codepurple}{rgb}{0.58,0,0.82}
\definecolor{backcolour}{rgb}{0.95,0.95,0.92}
\colorlet{mygray}{black!30}
\colorlet{mygreen}{green!60!blue}
\colorlet{mymauve}{red!60!blue}
\begin{document}

\title{LoopSCC: Towards Summarizing Multi-branch Loops within Determinate Cycles}

\author{Kai Zhu}
\email{zhukai2022@iie.ac.cn}
\affiliation{%
  \institution{Institute of Information Engineering}
  \city{Beijing}
  \state{Beijing}
  \country{China}
}

\author{Chenkai Guo}
\affiliation{%
  \institution{Nankai University}
  \city{Tianjin}
  \state{Tianjin}
  \country{China}
}

\author{Kuihao Yan}
\affiliation{%
  \institution{Institute of Information Engineering}
  \city{Beijing}
  \state{Beijing}
  \country{China}
}

\author{Xiaoqi Jia}
\affiliation{%
  \institution{Institute of Information Engineering}
  \city{Beijing}
  \state{Beijing}
  \country{China}
}

\author{Haichao Du}
\affiliation{%
  \institution{Institute of Information Engineering}
  \city{Beijing}
  \state{Beijing}
  \country{China}
}

\author{Qingjia Huang}
\affiliation{%
  \institution{Institute of Information Engineering}
  \city{Beijing}
  \state{Beijing}
  \country{China}
}

\author{Yamin Xie}
\affiliation{%
  \institution{Institute of Information Engineering}
  \city{Beijing}
  \state{Beijing}
  \country{China}
}

\author{Jing Tang}
\affiliation{%
  \institution{Institute of Information Engineering}
  \city{Beijing}
  \state{Beijing}
  \country{China}
}


\begin{abstract}
Analyzing programs with loops is a challenging task,  suffering from potential issues such as indeterminate number of iterations and exponential growth of control flow complexity. 
Loop summarization, as a static analysis method for concrete semantic interpretation, receives increasing focuses in the field of loop program analysis. 
By analyzing and representing the regularity in loop control flow, it produces symbolic expressions semantically equivalent to the loop program, enhancing the accuracy and efficiency of loop analysis.
However, current loop summarization methods are only suitable for single-branch loops or multi-branch loops with simple cycles, without supporting complex loops with irregular branch-to-branch transitions.
In this paper, we proposed LoopSCC, a novel loop summarization technique, to achieve concrete semantic interpretation on complex loop control flow. 
LoopSCC first utilizes an inside-out transformation to convert the nested loop into non-nested one. 
Then, it analyzes the control flow at the granularity of single-loop-path and applies the strongly connected components (SCC for short) for contraction and simplification, resulting in the contracted single-loop-path graph (CSG for short). Based on the control flow information provided by the CSG, we can convert the loop summary into a combination of SCC summaries. 
When an SCC contains irregular branch-to-branch transitions, we propose to explore a convergent range to identify the determinate cycles of different execution paths, referred as \textit{oscillatory interval}. 
According to the analysis of oscillatory interval, the loop summarization composed of both iteration conditions and execution operations can eventually be derived recursively.
Extensive experiments compared to six state-of-the-art loop interpretation methods are conducted to evaluate the effectiveness of LoopSCC. From the results, LoopSCC outperforms comparative methods in both interpretation accuracy and application effectiveness.
Especially, LoopSCC achieves a 100\% interpretation accuracy on public common-used benchmark. In addition, a systematical study for loop properties on three large-scale programs illustrates that LoopSCC presents outstanding scalability for real-world loop programs. 
The LoopSCC and experimental data are available at \textit{https://anonymous.4open.science/r/LoopSCC-386F}.

\end{abstract}

\begin{CCSXML}
<ccs2012>
   <concept>
       <concept_id>10011007.10011074.10011099</concept_id>
       <concept_desc>Software and its engineering~Software verification and validation</concept_desc>
       <concept_significance>500</concept_significance>
       </concept>
   <concept>
       <concept_id>10011007.10010940.10010992.10010998.10010999</concept_id>
       <concept_desc>Software and its engineering~Software verification</concept_desc>
       <concept_significance>500</concept_significance>
       </concept>
   <concept>
       <concept_id>10003752.10010124.10010138.10010142</concept_id>
       <concept_desc>Theory of computation~Program verification</concept_desc>
       <concept_significance>300</concept_significance>
       </concept>
   <concept>
       <concept_id>10011007.10011074.10011099.10011102.10011103</concept_id>
       <concept_desc>Software and its engineering~Software testing and debugging</concept_desc>
       <concept_significance>100</concept_significance>
       </concept>
 </ccs2012>
\end{CCSXML}

\ccsdesc[500]{Software and its engineering~Software verification and validation}
\ccsdesc[500]{Software and its engineering~Software verification}
\ccsdesc[300]{Theory of computation~Program verification}
\ccsdesc[100]{Software and its engineering~Software testing and debugging}

\keywords{Loop Summarization, Data-flow Analysis, Multi-branch Loop, Constraint Solving}


\maketitle

\section{Introduction}
Dominant software engineering analysis techniques, such as symbolic execution \cite{baldoni2018survey} and model checking \cite{clarke1997model}, require simulating the execution of each reachable path within the target program. In this process, complex program structures, represented by \textit{loops}, raise significant challenges to execution-based analysis techniques.
For instance, when dealing with complex loop structures, both symbolic execution and model checking suffer from serious \textit{path explosion} where an infinite number of paths derived from the loop need to be executed, resulting in program crashes and unreasonable analysis. 

To address this challenge, a straightforward method is \textit{iteration limit}, which limits the number of loop iterations, simulating the loop as a finite path and executing it sequentially \cite{biere1999symbolic} \cite{healy1998bounding}. However, such method inevitably results in significant information loss, leading to serious biases in both execution and analysis. 
Building on this, Saxena et al. \cite{saxena2009loop} has proposed \textit{loop-dependent code summarization}, leveraging efficient static analysis techniques to obtain the \textit{runtime properties} (particularly the execution results) of loops, without actual loop execution.
Such static loop analysis methods can be classified into two main categories based on the semantic interpretation of the original code structure: \ding{182} \textit{Abstract interpretation} designs new program structures to approximate the target loop logic of original program, where the abstract semantics interpreted by the newly designed structures that is a superset of the original semantics, ensuring that all the reachable states of the original loops are covered by the abstract interpretation. 
\ding{183} \textit{Concrete interpretation} designs a computable mathematical model to interpret the program semantics of target loop logic in an accurate way, making the target program logic semantically equivalent to the designed model. In comparison, abstract interpretation is simpler to design and implement, offering a variety of variants on either structures or logic. Consequently, current loop analysis efforts \cite{ball2001automatic} \cite{kincaid2017compositional} tend to be built on the abstract interpretation. 
Nevertheless, \textit{abstract interpretation} fails to fully represent the semantics of the original loop structures, as it suffers from similar information loss or redundancy as the aforementioned \textit{iteration limit} method, leading to incomplete program analysis. 

On the contrary, \textit{concrete interpretation} achieves comprehensive summarization of loops. 
As stated by the \textit{Rice theorem} \cite{rice1953classes} and the \textit{halting problem} \cite{davis2013computability}, the computation of concrete semantics is proved as an \textit{undecidable problem}, meaning that only certain types of loop structures allow for concrete semantic interpretation.
For instance, representative \textit{concrete interpretations} \cite{saxena2009loop} \cite{godefroid2011automatic} tend to concentrate on single-branch loops since they are less affected by the \textit{undecidable computation}.
For multi-branch loops that involve irregular jumps between loop blocks, the concrete interpretation becomes exceptionally challenging as the undecidable program execution.
Aiming at this challenge, formal-method based efforts \cite{xie2016proteus}\cite{xie2017automatic}\cite{blicha2022summarization} provide valuable attempts at concrete semantic interpretation for the multi-branch loops, where specialized path structures, such as path dependency automaton (PDA), are proposed to capture the execution dependency between the paths and transform irregular loop paths into parameterized periodic iterations. By analyzing the parameter expressions (such as iteration counters), they are able to interpret the periodic execution of the proposed path structures and further produce the semantic summarization of the complex multi-branch loops. 
However, such parameterized periodic summarization often fails to satisfy the requirements for \textit{parameter inductiveness}, known as the \textit{inductiveness trap}, which leads to significant uncertainty in the summarized iteration cycles, blocking the interpretive computation of loop execution.

In this work, instead of the traditional \textit{parameterized periodic summarization}, we propose a novel loop summarization with \textit{determinate cycles} to explicitly interpret the logical semantics within multi-branch loops, and further build a practical analysis framework \textit{LoopSCC} for precise, efficient and generalized analysis of program semantic.
First, to facilitate the summarization, the LoopSCC converts the target loop into a \textit{canonical form} with single input and output, using an existing \textit{Gaussian Elimination} based method \cite{Ashcroft_Manna_1979} \cite{Ammarguellat_1992}.
Then, based on the transitions among the blocks within loops, we construct a \textit{SPath graph} to represent the fine-grained control flow of the loops. By simplifying the nodes in the SPath  graph at the granularity of  \textit{strong connected components} (\texttt{SCC} for short), we can further obtain a directed acyclic graph focused on the SCC, referred to as contracted SPath graph (\texttt{CSG} for short).
For a target program with a complex loop structure, the execution will iterate repeatedly inside SCC and possibly exhibit a certain periodicity when iterating sufficiently. 
To extract such periodicity, we proposed \textit{oscillatory interval} to represent the iterations of SCC into a piecewise function calculation within a limited value scope.
To determine the \textit{oscillatory interval} within the target loop execution, we have proposed an iterative \textit{search algorithms}.
After that, the LoopSCC utilizes function operations such as addition and subtraction to extract the periodicity in the oscillatory interval. 
In particular, LoopSCC uses the pigeonhole principle \cite{trybulec1990pigeon} to derive the periodicity of discrete values directly. 
Finally, the target loop can be summarized by computing the result of periodic function extracted from the oscillatory interval.

We evaluate the effectiveness of LoopSCC from different perspectives through extensive experiments.
Firstly, we evaluate the summarization precision of LoopSCC compared with state-of-the-art baselines on public datasets, where LoopSCC achieves a 100\% summarization accuracy, outperforming all the baselines.
Secondly, we performed program verification using the benchmark \textit{SV-COMP 2024}.
The results indicate that LoopSCC correctly verifies 86\% of the test cases, outperforming the best competing tool VeriAbsL \cite{darke2023veriabsl} by 10.3\%.
Thirdly, LoopSCC is integrated into typical program analysis tools
to test the support of code analysis. The results demonstrate that LoopSCC significantly improves the analysis efficiency and coverage.
Finally, we systematically investigated the feasibility of using the LoopSCC to summarize loops with non-memory-related operations in three large open-source programs: \texttt{Bitcoin}, \texttt{musl}, and \texttt{Z3}. 
The results indicate that 81.5\% of the loops can be summarized using the LoopSCC, highlighting its outstanding scalability.

In summary, this work makes the following contributions:
\begin{itemize}
\item We proposed LoopSCC, a novel loop summarization framework based on strongly connected components, along with a dynamic programming-based interpretation algorithm to handle the implicit relationships within SPath conditions.
\item We proposed the \textit{finite oscillatory interval} for the code execution within loop structures, and conducted an in-depth analysis of the periodic variation patterns of the oscillatory interval, along with a concrete interpretation and computation scheme for the execution outcomes.
\item We conducted extensive comparative experiments on public datasets against state-of-the-art loop summarization methods, and the results demonstrate that the proposed LoopSCC is not only theoretically sound and effective but also significantly enhances practical code analysis techniques.
\end{itemize}

\section{Motivation}\label{sec:motivation}

\begin{figure}[!htbp]
\begin{minipage}[t]{.2\textwidth}
\begin{lstlisting}
while i < 100:
    # Branch A
    if x > 1:
        x += 1
        i += 3
    # Branch C
    elif x < -1:
        x += 1
        i += 5
    # Branch B
    else:
        x += 1
        i += 7
\end{lstlisting}
\subcaption{Acyclic Multi-branch Loop}\label{a}
\end{minipage}
\hfill
\begin{minipage}[t]{.2\textwidth}
\begin{lstlisting}
while i < 100:
    # Branch A
    if x > 1:
        x -= 1
        i += 3
    # Branch C
    elif x < -1:
        x += 1
        i += 5
    # Branch B
    else:
        x += 1
        i += 7
\end{lstlisting}
\subcaption{Cyclic Multi-branch Loop}\label{b}
\end{minipage}
\hfill
\begin{minipage}[t]{.2\textwidth}
\begin{lstlisting}
while i < 100:
    # Branch A
    if x >= 50:
        x -= 2
        i += 3
    # Branch C
    elif x < 0:
        x += 1
        i += 5
    # Branch B
    else:
        x += 11
        i += 7
\end{lstlisting}
\subcaption{Inductiveness Trap}\label{c}
\end{minipage}
\hfill
\begin{minipage}[t]{.2\textwidth}
\begin{lstlisting}
while i < 100:
    # Branch A
    if x > 1:
        x -= 5
        i += 3
    # Branch C
    elif x < -1:
        x += 1
        i += 5
    # Branch B
    else:
        x += 9
        i += 7
\end{lstlisting}
\subcaption{Connected Cycles}\label{d}
\end{minipage}
\caption{Motivating Examples.}
\label{fig:samples}
\end{figure}

\begin{figure}[htbp]
\begin{minipage}[b]{.2\textwidth}
\includegraphics[width=\textwidth]{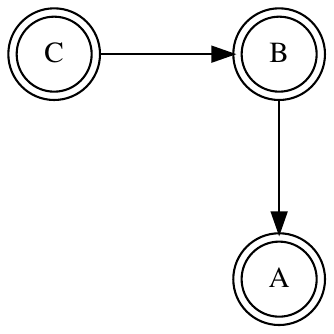}\\
\subcaption{}
\end{minipage}
\hfill
\begin{minipage}[b]{.2\textwidth}
\includegraphics[width=\textwidth]{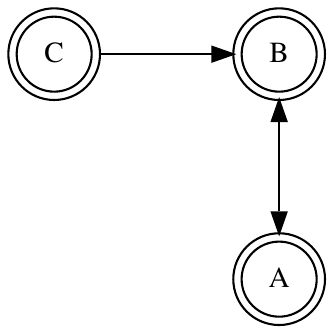}\\
\subcaption{}
\end{minipage}
\hfill
\begin{minipage}[b]{.2\textwidth}
\includegraphics[width=\textwidth]{Img/sample2}\\
\subcaption{}
\end{minipage}
\hfill
\begin{minipage}[b]{.2\textwidth}
\includegraphics[width=\textwidth]{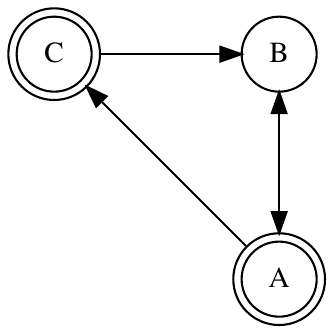}\\
\subcaption{}
\end{minipage}
\caption{CFGs Corresponding to Motivating Examples.}
\label{fig:jump}
\end{figure}

Fig. \ref{fig:samples} lists five typical loop slices that motivate our work. 
Fig. \ref{fig:samples}(a) presents a simple and generic three-branch ($A, B$, and $C$) loop slice. Since there are no \textit{cyclic jump} among the three branches (as shown in Fig. \ref{fig:jump}(a)), they essentially correspond to single-branch loops and can be correctly summarized by traditional methods such as \textit{LESE} \cite{saxena2009loop}, \textit{APLS} \cite{godefroid2011automatic} , \textit{APC} \cite{strejvcek2012abstracting} , \textit{Proteus}\cite{xie2016proteus} and \textit{Wsummarizer}\cite{blicha2022summarization}. For branch $A$, assuming the initial states of variables $x$ and $i$ are $x_0$ and $i_0$, the summary for $A$ is $x = x_0 + N \cap i = i_0 + 3*N$, where $N$ is the number of iterations of $A$.

However, in Fig. \ref{fig:samples}(b), branch $A$ and $B$ form a simple \textit{cyclic jump}, which cannot be handled by traditional single-branch methods \cite{saxena2009loop} \cite{godefroid2011automatic}. 
Unlike Fig. \ref{fig:samples}(a), the number of iterations for branch $A$ is $x_0 - 1$, and the states of $x$ and $i$ become $1$ and $i_0 + 3x_0 - 3$, respectively after iterations. Subsequently, branch $B$ is executed \textit{once}, after which the states of $x$ and $i$ change to $2$ and $i_0 + 3x_0 + 2$, respectively. 
In the example, the number of iterations for $A \leftrightarrow B$ cycle can be represented as a parameterized expression $x_0$ along with the initial variables.
Existing efforts (\textit{Proteus}\cite{xie2016proteus} and \textit{Wsummarizer}\cite{blicha2022summarization}) transform the parameterized expression into a single-branch loop for summarization.

However, such parameterized expression transformation does not always work in the loop summarization. For instance, in the loop operation of Fig. \ref{fig:samples}(c), a slight change in the operation on $x$ requires the iteration count $N$ of branch $B$ to satisfy the conditions $(11*(N-1) + x < 50) \cap (11*N + x \geq 50)$, such that the value of $N$ can be derived as $\lfloor\frac{(60-x)}{11}\rfloor$. 
The derived value is a \textit{non-inductive variable} \cite{xie2016proteus} which cannot be directly analyzed and computed, and thus are hard to be summarized by traditional methods. Such problem is called \textit{inductiveness trap} in our work, meaning that simple operations like addition, subtraction, multiplication, and division can generate complex and non-inductive mathematical functions, such as $floor()$, exponentiation, logarithm, and their combinations, when summarizing the number of loop iterations.

Additionally, multiple branches within the loop of Fig. \ref{fig:samples}(d) may evolve in \textit{connected-cycle jumps}, as shown in Fig. \ref{fig:jump}(d). When a branch ($A$ or $C$) completes its iterations, it is hard to determine the destination branch for jumping, which makes the current methods ineffective.

The motivating examples illustrate that the current loop summarization efforts only cover single-branch loops or simple multi-branch loops, which is hard to 
address the commonly occurring issues like \textit{inductiveness trap} and \textit{connected cycle}. That inspires us to explore a new avenue to 
address these issues fundamentally by extracting specific \textit{determinate cycles} within complex multi-branch loops.

\section{Preliminaries}\label{sec:prelimi}
To facilitate the description of our approach, we will first clarify some existing related concepts on the \textit{graph theory} \cite{bondy2008graph}.

\begin{myDef}
	\label{def1}
   \textbf{Cycle.} Given a directed graph $G = (N, E)$, a cycle $C=\{c_1,c_2,\ldots,c_k\}\subseteq N$ is a set of nodes satisfying that $\forall i \in [1,k], \forall j \in [1,k]: c_i \to c_j$, where $\to$ denotes the reachable relationship; $N$ and $E$ are the set of nodes and edges, respectively. 
\end{myDef}
Any two nodes in a cycle are reachable to each other.

\begin{myDef}
	\label{def2}
   \textbf{Strongly Connected Component (SCC).}  SCC $scc={n^{scc},e^{scc}} \subseteq G$ is a cycle in a directed graph $G$., and no additional node can be added to maintain the cycle, where $n^{scc}$ and $e^{scc}$ are the set of nodes and edges of $scc$, respectively. 
\end{myDef}
The number of nodes $|n^{scc}|$ in an SCC is referred to \textit{the order of SCC}. A 1-order SCC is called a \textit{self-loop}, while a 0-order SCC contains a single node without any edge. SCC with an order greater than 1 is called \textit{high-order SCC}; otherwise, it is called \textit{low-order SCC}.

\begin{myDef}
	\label{def3}
   \textbf{Contraction.} A directed graph $G$ can be partitioned  into multiple classes by the SCCs , i.e., $G = {N,E}=\{scc_1, scc_2, ..., scc_k\}$, where $N= \bigcap_{i=1}^k n^{scc}_i$ and $E= E_{inner} \cup E_{inter}=  (\bigcup_{j=1}^k e^{scc}_j) \cup E_{inter}$.  The \textit{contraction} is to construct a graph $G'=\{N', E'\}$ that is semantically equivalent to $G$, so that $N'=\{n_1, n_2, ...,n_k\}$ and $E'= E_{inter}$.
\end{myDef}
$E_{inner}$ refers to the inner edges of SCCs and $E_{inter}$ refers to the set of edges between SCCs. Essentially, the contraction is the process of simplifying each strongly connected class into a single abstract node,
serving as the fundamental operation in further cycle summarization. It is obvious that the new directed graph $G'$ after contraction is acyclic.
Then, we clarify the concepts related to a single loop structure. Due to the complexity of loop structures, it is difficult to handle them uniformly. We first define an easily processed standard loop structure called the \textit{canonical form}. 
\begin{myDef}
	\label{def4}
   \textbf{Canonical Form}. The canonical form $CF$ of a single loop iteration is a \textit{directed acyclic graph}, which has a unique extry and exit, i.e.,  $CF(N, E, x) \Rightarrow y$, where $N$ and $E$ are the set of nodes and edges contained by the graph; $x$ and $y$ are the unique extry and exit, respectively.
\end{myDef}

It can be inferred from existing structured programming theorems (\textit{Böhm-Jacopini theorems}) \cite{Böhm_Jacopini_1966}\cite{Kozen_Tseng_2008}, complex single loop structure that contains multiple entries or exits (e.g., impacted by \texttt{break} or \texttt{goto} sentence) can be transformed to the \textit{canonical form}. In practice, we have implemented a generalized transformation module employing traditional program normalization algorithms 
\cite{Ashcroft_Manna_1979} \cite{Ammarguellat_1992}, and integrated it into the LoopSCC.
Afterwards, we explore the paths within the canonical form, denoted as \textit{single-loop pat}h.

\begin{myDef}
	\label{def5}
   \textbf{Single-loop Path (SPath)}. SPath $sp=\{n_1\to n_2\to \ldots \to n_k\}$ is a node sequence in the canonical form $CF(N, E, x)$, where $\{n_1,n_2,\ldots,n_k\} \subseteq N $.
\end{myDef}

\begin{myDef}
	\label{def6}
   \textbf{SPath Operation}. Given a SPath $sp$, the operation set $sp.OP$ on the $sp$ is a map from the values of prefix variables $sp.Pre$ to the values of post variables $sp.Post$ of $sp$, denoted as $sp.Op: sp.Pre \to sp.Post$.
\end{myDef}

Note that the memory-oriented operations, e.g., array indexed by a variable or pointer assignment, are out of our scope, since the objects of such operations are uncertain, which randomizes the direction of loop iteration.

\begin{algorithm}[ht]
\SetKwInOut{Input}{Input}\SetKwInOut{Output}{Output}
\caption{Computation of SPath Condition and Operation.}\label{alg:SPath-Condition}
\Input{$sp$}
\Output{$sp.Cond$, $sp.Op$}

$sp.Op \gets sp.Pre$\;

\ForEach{$node \in sp.Nodes$}{
    \uIf{$node$ is Conditional Node}{
        $sp.Cond.insert(Substitude(node.Cond,sp.Op))$
    }
    
    \Else{
        $new\_value \gets Substitude(node.Op.rvalue,sp.Op)$\;
        $sp.Op[node.Op.lvalue] \gets new\_value$\;
    }
}
\end{algorithm}

\begin{myDef}
	\label{def7}
   \textbf{SPath Jump}. Given two SPaths $sp_1$ and $sp_2$, if the post variable values of $sp_1$, i.e., $sp_1.Post$ make the condition set of $sp_2$, i.e.,  $sp_2.Cond$ hold, there is a \textit{jump} between $sp_1$ and $sp_2$, denoted as $\mathcal{J}(sp_1 ,sp_2)$.
\end{myDef}

To determine the jumps of SPath in the loop iteration, it requires computing the SPath condition and operation first. 
Algorithm \ref{alg:SPath-Condition} provides a method for calculating the loop conditions and operations of the SPath through a forward traversal.
Afterwards, we define \textit{SPath graph} to build the abstract structure for an entire loop.

\begin{myDef}
	\label{def8}
   \textbf{SPath Graph}. An SPath graph $SG_l$ for a loop $l$ is a quadruple, i.e., $\{sp_s, sp_e, SP, \mathcal{J}\}$, where $sp_s$ and $sp_e$ are two empty SPaths representing the starting node and end node; $SP$ is the set of SPaths in the $l$; $\mathcal{J}$ denotes the set of SPath jumps in the $l$.
\end{myDef}

In the SPath graph, the SPaths and jumps in the loop are abstracted as nodes and edges, respectively. Then, the execution process of the loop is the sequence of SPaths from the $sp_s$ to $sp_e$, i.e., ${sp_s, sp_1, sp_2, ...,  , sp_e}$.
The SPath graph may contain cycles which complicates the further summarization. To alleviate it, LoopSCC transforms the original SPath graph into a directed acyclic graph based on the SCC-based contraction as Def.\ref{def3}, called \textit{CSG} (Contracted SPath graph). Furthermore, redundant edges that are not part of the path from the $sp_s$ to $sp_e$ are removed.

\begin{figure}[ht]
\begin{minipage}[b]{.4\textwidth}
\begin{lstlisting}
while i < 100:
    # Branch A
    if x < 0:
        x += 2
        i += 3
    else:
        x += 7
        # Branch B
        if x < 5:
            x += 3
            i += 1
        # Branch C
        else:
            x -= 12
            i += 2
\end{lstlisting}
\subcaption{Source Code}
\label{lst:sample1}
\end{minipage}
\hfill
\begin{minipage}[b]{.25\textwidth}
\includegraphics[width=\textwidth]{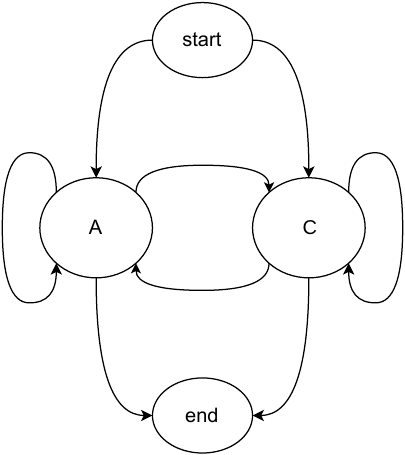}\\
\subcaption{SPath Graph}
\label{fig:pfg}
\end{minipage}
\hfill
\begin{minipage}[b]{.1\textwidth}
\includegraphics[width=\textwidth]{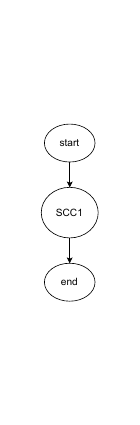}\\
\subcaption{CSG}
\label{fig:route}
\end{minipage}
\caption{Example of SPath Graph and CSG.}
\label{fig:SPathGraph-CSG}
\end{figure}

\textbf{Example.} We use the analysis of the loop program in Fig. \ref{fig:SPathGraph-CSG}(a) to illustrate the process of constructing the SPath graph and the CSG. 
We first compute the $sp.Op$ for \texttt{Branch B}. For initial variables $x_0, i_0$, after being fed into the statements related to \texttt{Branch B}, i.e., $x+=7; x+=3;$ and $i+=1;$, the $sp.Op$  can be computed as $x=x_0+10$ and $i=i_0+1$.
Then we compute the $sp.Cond$ for \texttt{Branch B}. 
There are two conditional statements related to \texttt{Branch B}, i.e., $x >= 0$ and $x<5$. For $x >= 0$, since the variable value is the same as the pre-variable, we directly obtain the conditional expression $x_0 >= 0$. For $x<5$, there exists a mapping relationship $x=x_0+7$ between the variable and the pre-variable, and thus the conditional expression is computed as $x_0+7<5$. Afterwards, $sp.Cond$ for \texttt{Branch B} is an unsatisfiable condition $x_0 >= 0 \cap x_0<-2$, which demonstrates the \texttt{Branch B} is an invalid SPath.

Similarly, we continue to compute the $sp.Op$ for \texttt{Branch A} is $x=x_0+2$ and $i=i_0+3$, the $sp.Cond$ is $x_0<0$; the $sp.Op$ for \texttt{Branch C} is $x=x_0-5$ and $i=i_0+2$, the $sp.Cond$ is $x_0>=0  \cap x_0>=-2$, i.e., $x_0>=0$. Since the $sp.Cond$ of the two branches do not conflict with the loop condition, both of them are valid SPaths, denoted as $sp_A$ and $sp_C$.

Subsequently, we compute the SPath jumps $\mathcal{J}$ between valid SPaths. Assuming that the $sp_A.Pre$ is $x_0$ and $i_0$; $sp_A.Post$ is $x_1$ and $i_1$, then we have $x_1=x_0+2$. Thus, $sp_A.Cond$ and $sp_C.Cond$ can be satisfied simultaneously, making $\mathcal{J}(sp_A,sp_C)$ exists. Similarly, $sp_A.Cond$ and the condition of end SPath $sp_e.Cond$ can be satisfied simultaneously, making $\mathcal{J}(sp_A,sp_e)$ exists.
After computing all the SPath jumps, the SPath graph can be built as Fig. \ref{fig:SPathGraph-CSG}(b). Finally, we can build the CSG by \textit{contraction} for SCCs, as shown in Fig. \ref{fig:SPathGraph-CSG}(c).

\section{SCC-based Loop Summarization} \label{sec:approach}

\subsection{Overview}

\begin{figure}[htbp]
    \centering
    \includegraphics[width=.85\textwidth]{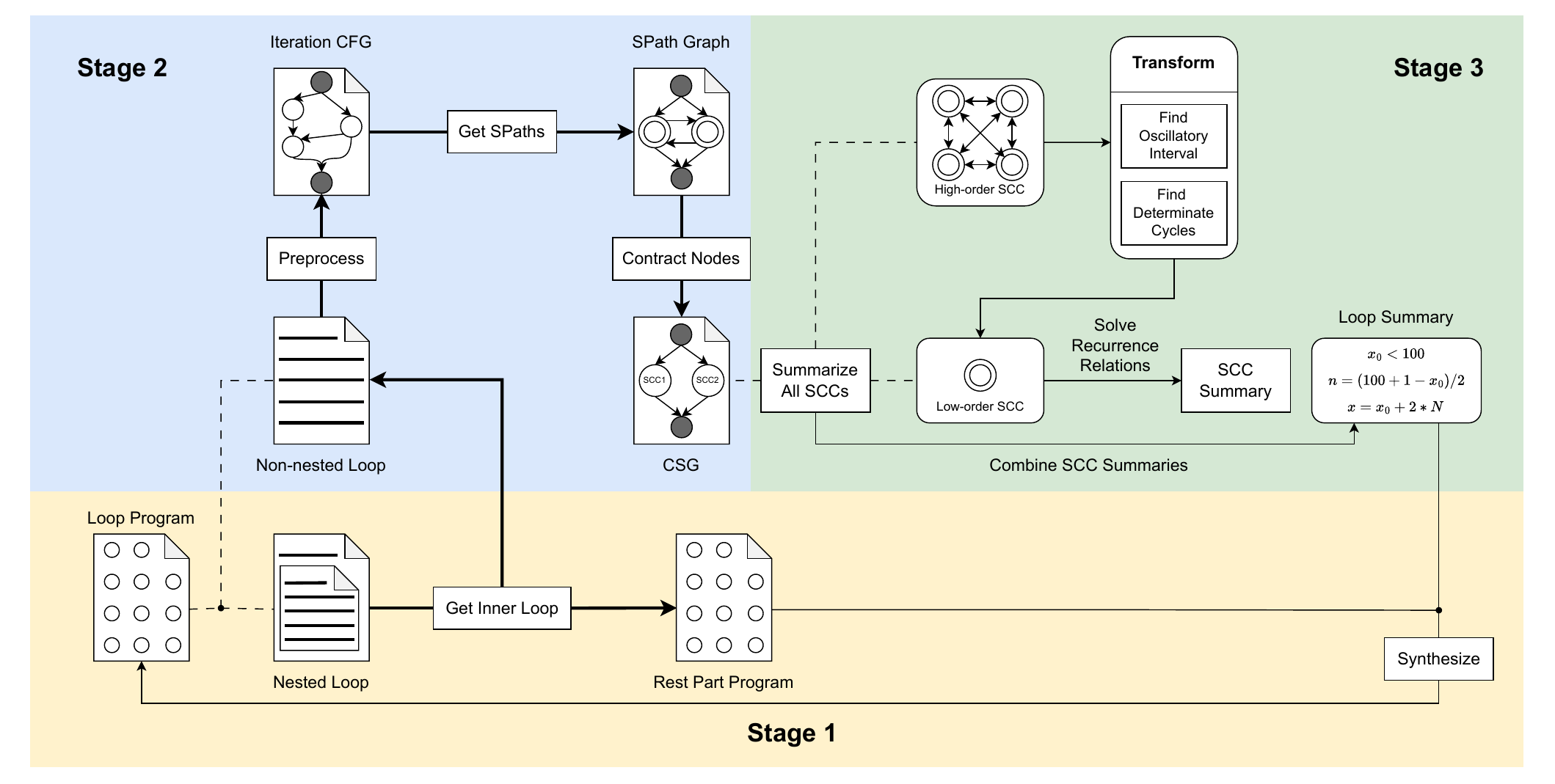}
    \caption{The Workflow of LoopSCC}
    \label{fig:overview}
\end{figure}
Fig. \ref{fig:overview} presents the workflow of LoopSCC. 
First, since the nested loops are prone to complicating the construction of SPath graph, LoopSCC adopts the inside-out conversion algorithm 
to transform the nested loops into non-nested ones (stage \ding{182}). 
Then LoopSCC builds the SPath graph and CSG for the target loop by SPath extraction, jump condition computing, and SCC-based contraction (stage \ding{183}).
Afterwards, the SCCs are comprehensively computed to generate the final loop summarization (stage \ding{184}), which can be divided into three parts according to the order of SCCs:
1) For 0-order SCCs, the loop summarization can be directly determined and obtained by corresponding SPath operations. 
2) For 1-order SCCs, the loop summarization requires solving the closed-form expression of recursive SPath operations. 
3) For high-order SCCs, we introduce \textit{oscillatory interval} and \textit{determinate cycles} to transform the high-order SCCs into 1-order ones, which are then subjected to a unified loop summarization.

\subsection{SCC-based Flow Analysis}
\subsubsection {Transformation of Nested Loop}
Nested loops can introduce complex data flow relationships, making direct summarization challenging. Therefore, for nested loops, we need to convert them into non-nested loops before summarization. Our approach is to first perform a loop summarization on the innermost loop, then transform it into a linear program and reintegrate it into the original nested loop program, thereby eliminating the nesting.
A typical loop summarization (for the summarization details, see the following \S\ref{subsec:loop-summ}) contains two parts: 1) Value range of involved variables; 2) Mappings between pre-variables and post-variables, which are transformed by LoopSCC separately. For the variable values in the summarization, LoopSCC transforms them into conditions for different branches based on the condition states. For variable mapping relationships, LoopSCC converts them into assignment statements within corresponding branches according to the operation types. 

For instance, if the innermost loop of a target nested loop is summarized as   $(x_0>=100 \cap x=x_0) \cup (x_0<100 \cap N=(100+1-x_0)/2 \cap x=x_0+2*N)$, LoopSCC can transform it into an equivalent non-nested linear program in two steps.
\ding{182} For $(x_0>=100 \cap x=x_0)$, the condition $(x_0>=100)$ is transformed to condition sentence  $if(x>=100)$, and the assignment $x=x_0$ is transformed to operation $x=x$ for the first branch. \ding{183} For $(x_0<100 \cap N=(100+1-x_0)/2 \cap x=x_0+2*N)$, it can be transformed into operations $N = (100 + 1 - x); x = x + 2 * N;$ with condition $if(x<100)$ for another branch.

\subsubsection{Construction of CSG}
For non-nested loops, we analyze the SCC related structures to facilitate further summarization. 
First, we preprocess the loop program and obtain the \textit{continuation equations} \cite{Ammarguellat_1992} using a \textit{depth-first search algorithm}. 
Subsequently, we solve the system of \textit{continuation equations} using a \textit{Gaussian elimination-like resolution} method. 
After transforming the results, we can obtain the canonical form of the loop.

To construct the SPath graph and CSG defined in \S\ref{sec:prelimi}, we then generate the entire control flow graph (CFG) for the target program by existing construction algorithm \cite{allen1970control}. 
Using the \textit{reverse post-order depth-first search algorithm} \cite{tarjan1974finding}, we identified all \textit{dominator nodes} in the CFG, which allowed us to generate all SPaths of the CFG through dominator connections. 
Meanwhile, the jump relations between SPaths can be calculated by the \textit{Z3 solver}. Subsequently, the SPath graph can be constructed by  the generated SPaths and jump relations. 
After that, all the SCCs within the SPath graph are explored with \textit{Tarjan}'s algorithm \cite{tarjan1972depth}.
Based on the jump relations of the SPaths, we can derive the jump relations of the SCCs and subsequently construct the CSG.

\subsection{Loop Summarization} \label{subsec:loop-summ}
\subsubsection{Summarization for 1-order SCC}
The summarization of the 1-order SCC can be treated as $n$ iterations of a SPath $sp$, that is, given the $sp_1.Pre$, the summarization is to compute the $sp_n.Post$, where $sp_i$ denotes the SPath of $i$-th iteration.
The essence is to get the closed-form expression of the recurrence relation for the iteration and compute the value of the $n$-th term of the expression. 
It can be observed that in $i$-th iteration ($i<n$), the $sp_i.Pre$ satisfies the $sp_i.Cond$ but $sp_{n-1}.Post$ does not satisfies the $sp_n.Cond$. 
LoopSCC translates such {existential quantification} statements into constraints and feeds them into an SAT solver to compute the satisfiable values of $n$.
Note that among the computed values, only the smallest one is the final $n$ we seek, which is hard to be explored directly. Therefore, we design an improved dynamic programming algorithm to find this value as shown in the Algorithm \ref{alg:dynamic-programming}.

\begin{algorithm}[ht]
\SetKwInOut{Input}{Input}\SetKwInOut{Output}{Output}
\caption{Determination of the Number of Iterations}\label{alg:dynamic-programming}
\Input{$sp.Cond$, solver}
\Output{the number of iterations in 1-order SCC}

$solver \gets \{n>0, sp_{n-1}.Cond, \neg sp_{n}.Cod) \}$\;
$n_{val} \gets solver.solve()$\;

\While{True}{
    $solver \gets n < n_{val}$\;
    
    \uIf{$solver.solve()$ is SAT}{
        break \;
    }
    
    \Else{
        $n_{val} \gets solver.solve()$\;
    }
    
}

\end{algorithm}

From Algorithm \ref{alg:dynamic-programming}, once a satisfiable value $n_{val}$ is found, an extra constraint $n<n_{val}$ is added to explore a smaller $n_{val}$. The process continues until no smaller $n_{val}$ can be found, where the actual number of iterations in 1-order SCC is determined.

\textbf{Example.} Review back to the SPath \texttt{A} in the example of Fig. \ref{fig:SPathGraph-CSG}, where the $sp.Cond$ and $sp.Op$ are $x<0$ and $x=x_0+2$, respectively. LoopSCC generates the closed-form expression $x=x_0+2*N$ and feeds the condition $x_0 + 2*N>=0 \cap x_0+2*(N-1)<0$ into the SAT solver. Subsequently, we obtain a satisfiable value of $n$ as $(1-x_0)/2$. Since there are no smaller satisfiable values, this value is the final number of iterations.

\textbf{Existential Quantification vs. Universal Quantification.}
Advanced loop summarization efforts, (e.g., \textit{Proteus} \cite{xie2016proteus} and \textit{Wsummarizer}\cite{blicha2022summarization}) proposed to extract the explicit symbolic representation for the number of iterations $n$, i.e., compute the $n$ from $sp_n.Cond$, which is a  \textit{universal quantification} method. 
However, practical conditions often contain \textit{implicit expression} that can not be directly computed, making existing efforts invalid. For instance, for a simple loop \texttt{while $x^7 < x^3 + 2$: $x = x + 2$}, whose $sp.Op$ and $sp.Cond$ are $x=x_0+2*n$ and $x^7<x^3+2$, respectively, the final condition can be referred as $(x_0+2*n)^7 < (x_0+2*n)^3 + 2$. This is a typical \textit{implicit expression}, from which an explicit expression regarding $n$ cannot be extracted. 
On the contrary, LoopSCC employs an existential quantification strategy, that is to identify an appropriate $n$ that satisfies the $sp_n.Cond$, effectively addressing the issues raised by the \textit{implicit expressions}. In fact, our summarization of 1-order SCC exhibits strong universality: as long as each iterative $sp_i.Cond$ based on the $sp_i.Op$ (the mapping from $sp_i.Pre$ to $sp_i.Post$) can be expressed as closed-form expressions, and the $sp_i.Cond$ can be solved by an SAT solver, LoopSCC can successfully summarize them.

\subsubsection{Summarization for High-order SCC}\label{subsec:high-order}

We first conduct a loop summarization of real addition and subtraction for 2-order SCCs, and then expand it into a general scenario.
Reviewing Fig. \ref{fig:samples}(c) in the motivation section, where branches $A$ and $B$ form a 2-order SCC. 
In the loop, when the value of $x$ is less than 50, $x$ continues to increase; otherwise, it decreases. Furthermore, the value of $x$ gradually converges and stabilizes within the interval $[48,61)$. When $x$ falls within this interval, the program repeatedly executes the \textit{if and else} branches, so we refer to this convergence interval as the \textit{oscillatory interval}. The operations within the oscillatory interval are piecewise functions, but we can represent them directly using a non-piecewise modular addition function, i.e., $x_n=(x_{n-1}-48+11)\bmod(61-48)+48$, which can be simplified to $x_n=(x_{n-1}+2)\bmod 13 +48$ whose closed-form expression is $x_n=(x_0+2*n)\bmod 13 +48$.
Therefore, in this case, the summarization of high-order SCCs within the oscillatory interval can be viewed as a summarization of 1-order SCCs. When the SCCs are outside the oscillatory interval, we can calculate how many iterations it takes to transition into the oscillatory interval, which essentially is also a summarization problem of 1-order SCCs.
This case inspires us a feasible avenue for the high-order summarization is to seek the \textit{oscillatory interval} and analyze its \textit{determinate periodicity} within the SCCs. First, we clarify some conceptions by the following definitions. 

\begin{myDef}
	\label{def9}
   \textbf{Oscillatory Interval.} Oscillatory interval $[a,b]$ is an enclosed interval under operations $sp.Op$. If a pre-variable of $sp_i$ satisfies $sp_i.pre \in [a,b]$, after $sp_i.Op$, the corresponding post-variable still satisfies $sp_i.Post \in [a,b]$.
\end{myDef}

\begin{myDef}
	\label{def10}
   \textbf{Execution Periodicity.} In a high-order SCC, if there exists a period length $T$ such that for any $i\in [m,n]$, making $sp_i=sp_{i+T}$ hold, we call the SCC in $[m,n]$ has execution periodicity, denoted as $\widetilde{scc}_{[m,n]}^T$, and the $[m,n]$ is called node interval. 
\end{myDef}

Any node interval $[m,n]$ corresponds to a value interval $[a, b]$ of involved variables in the $[m,n]$, i.e.,  $\forall sp \in \widetilde{scc}_{[m,n]}^T, sp.Post \in [a, b]$. Thus, we can seek the \textit{oscillatory interval} by merging the value intervals of node intervals. To facilitate the implementation, we categorize the value intervals of SPaths during the SCC execution into three types according to the triggered destination.

\begin{itemize}
\item \textit{Jumping Interval (J-Interval)}: The value interval that triggers the SPath node to jump to another node in the current SCC. 
\item \textit{Iteration Interval (I-Interval}): The value interval that triggers the SPath node to jump to itself in the SCC. 
\item \textit{Terminal Interval (T-Interval)}: The value interval that triggers the SPath node to jump to the node in another SCC, which indicates the execution terminal of the current SCC.
\end{itemize}

\begin{myTheo}\label{theo1}
If the oscillatory interval $O$ of a high-order SCC, $O$ contains all the J-Intervals and can be divided into a finite number of periodic subintervals,, then the summarization of $scc$ can be converted into a summarization of low-order SCC.
\end{myTheo}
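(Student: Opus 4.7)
The plan is to decompose the execution trace of the high-order SCC into two stages, a \emph{transient} stage before the state enters the oscillatory interval $O$ and a \emph{resident} stage after it enters, and to show that each stage can be summarized as a finite composition of 1-order SCC summaries, which by the previous subsection is already handled by Algorithm 2.

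First I would handle the transient stage. Because $O$ contains every J-Interval by hypothesis, outside $O$ no SPath of the SCC can trigger a jump to a different SPath of the same SCC. Hence at every iteration whose pre-state lies outside $O$ only two alternatives are enabled: an I-Interval, in which case the current SPath self-loops, or a T-Interval, which exits the SCC. The trace therefore consists of runs of self-loops separated by the unique first jump into $O$ (or into a T-Interval). Each self-loop run is itself a 1-order SCC whose length is produced by Algorithm 2, and the number of such runs is bounded by the number of SPaths in the SCC, so the transient is captured as a finite composition of 1-order summaries.

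Next I would handle the resident stage. By Definition 9 the interval $O$ is closed under $sp.Op$, so once entered it is never left through an internal jump; the only way out is through a T-Interval already contained in $O$. By hypothesis $O$ partitions into finitely many periodic subintervals $O_1,\ldots,O_k$, and within each $O_j$ Definition 10 gives a period length $T_j$ such that $sp_i=sp_{i+T_j}$. Thus on $O_j$ the composite operation "apply the $T_j$ consecutive SPath operations in order" is a single well-defined map, acting as one synthetic SPath $\widetilde{sp}_j$; iterating $\widetilde{sp}_j$ is by construction a 1-order SCC and is summarized by Algorithm 2. Transitions between the subintervals are driven by J-Intervals living inside $O$, and since there are only $k$ such subintervals, the resident trace visits each $O_j$ at most a bounded number of times before the state enters a T-Interval, producing once again a finite composition of 1-order summaries. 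Concatenating the transient and resident summaries yields the required low-order summarization of the original SCC.

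The main obstacle, and the part that I expect to need the most care, is the closed-form construction of each composite map $\widetilde{sp}_j$. The motivating case hides this behind the clean modular recurrence $x_n=(x_0+2n)\bmod 13+48$, but in general the $T_j$ SPath operations composed inside one period may form an affine, piecewise-affine, or modular-affine map whose $n$-th iterate must still be expressible in a form that the SAT solver invoked by Algorithm 2 can handle. I would therefore devote the heart of the proof to showing that the hypothesis of "finitely many periodic subintervals", combined with the additive/subtractive operation class admitted by LoopSCC, forces each $\widetilde{sp}_j$ to admit such a closed form; once this is established the reduction to the 1-order case follows immediately from the analysis of the previous subsection.
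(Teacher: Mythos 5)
Your proposal is correct and follows essentially the same route as the paper's proof: you split the execution into the part outside $O$ (where, since all J-Intervals lie in $O$, only I-Intervals and T-Intervals are enabled, reducing to 1-order and 0-order summarization) and the part inside $O$ (where each periodic subinterval yields a composite map that is iterated as a 1-order SCC), which is exactly the paper's two-case decomposition. The closed-form concern you raise at the end is legitimate but is not resolved in the paper's proof either; the paper defers it to the finite-value case handled by Theorem~\ref{theo2} via the pigeonhole principle.
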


\begin{proof}\label{proof1}
The entire execution of $scc$ contains two typical parts according to the value intervals of involved variables, i.e., the inner of $O$ and the outer of $O$.
\ding{182} For the inner one, the SPath sequence triggered by by the periodic subintervals of $O$ is also periodic and can be summarized according to the 1-order SCC summarization.
Therefore, summarizing the inner of $O$ results in the combined summary of all periodic subintervals.
\ding{183} For the outer one, there are only I-Intervals and T-Intervals for the execution since all the J-Intervals are contained by $O$. The SCC summarization triggered by I-Intervals and T-Intervals can refer to 1-order and 0-order SCCs, respectively.
\end{proof}

Therefore, the key to summarizing $scc$ is to \textit{identify $O$} that covers all J-Intervals of $scc$, as well as to \textit{determine the periodicity} of $O$: 
\ding{182} To identify $O$, we propose an \textit{iterative convergence algorithm} illustrated in the Algorithm \ref{alg:oscillatory}, which starts from all nodes triggered by J-Intervals, and continuously explores the remaining nodes until the values of all explored nodes converge to a certain interval. 
\ding{183} The periodicity of $O$ is determined by Theorem \ref{theo2}.

\begin{algorithm}[ht]
\SetKwInOut{Input}{Input}\SetKwInOut{Output}{Output}
\caption{Identification of Oscillatory Interval}\label{alg:oscillatory}
\Input{$S$: all nodes of target SCC; $J$: all J-Nodes in $S$; $OP()$: operation set of $S$}
\Output{$O$: oscillatory interval}

$A \gets J$\;
$B \gets None$\;

\While{True}{
    $B \gets OP(A) \cap S$\;
    
    \uIf{$(B\nsubseteq A)$}{
        $A \gets A \cup B$ \;
    }
    
    \Else{
        $O \gets A$\;
        break\;
    }
}
\end{algorithm}

\begin{myTheo}\label{theo2}
If an oscillatory interval $O$ of a high-order SCC, $O$ contains finite $N$ values, it has at most $N$ periodic subintervals for the execution of $scc$.
\end{myTheo}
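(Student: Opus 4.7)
The plan is to argue by the pigeonhole principle, exploiting the fact that the SPath operations within an SCC are deterministic maps on the state space. First I would observe that $O$, being closed under $sp.Op$, carries a well-defined dynamical system: each value $v \in O$ has a unique successor $OP(v) \in O$. Because $O$ contains only $N$ distinct values, any orbit $v, OP(v), OP^2(v), \ldots$ must revisit some value after at most $N$ steps, at which point the orbit becomes strictly periodic. Thus every value in $O$ eventually lands in a cycle of length at most $N$.

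Next I would make the connection between cycles of values and the periodic subintervals referenced in Def.~\ref{def10}. A periodic subinterval $\widetilde{scc}_{[m,n]}^{T}$ corresponds to an SPath sequence whose induced value sequence is $T$-periodic; this in turn corresponds to a single cycle of length $T$ in the dynamical system on $O$. Distinct periodic subintervals correspond to distinct cycles, and because the successor function is a (partial) function rather than a relation, two cycles must either coincide or be entirely disjoint on the value level.

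From here the counting is immediate: the cycles form a collection of pairwise disjoint subsets of $O$, and each has cardinality at least one, so there can be no more than $N$ of them. Hence $O$ decomposes into at most $N$ periodic subintervals, establishing the bound. I would also remark that this bound is tight in the degenerate case where every value is a fixed point, giving $N$ subintervals of period $1$.

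The main obstacle I anticipate is not the combinatorial core, which is essentially pigeonhole, but the bookkeeping needed to lift the argument from values to SPath sequences: one must check that identifying ``periodic subintervals'' with ``cycles of $OP$ on $O$'' is faithful, i.e., that the SPath chosen at each state is itself determined by the current value (so that equal value cycles force equal SPath cycles). This follows from Def.~\ref{def7}, since the jump $\mathcal{J}(sp_i, sp_{i+1})$ is decided purely by whether $sp_i.Post$ satisfies $sp_{i+1}.Cond$, and that is a function of the state alone. Once this is noted, the theorem follows directly from the pigeonhole argument sketched above.
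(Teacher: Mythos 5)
Your proposal is correct and follows essentially the same route as the paper's own proof: both rest on the pigeonhole principle applied to the $N$ values of the closed interval $O$ under the deterministic SPath operations, forcing every orbit to revisit a value and hence become periodic, with at most $N$ resulting cycles. Your write-up is in fact more careful than the paper's one-line argument, notably in making the disjoint-cycle counting explicit and in checking that the successor SPath is determined by the current state (via Def.~\ref{def7}), a point the paper leaves implicit.
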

\begin{proof}\label{proof2}
According to the enclosed operations of $O$ and the \textit{pigeonhole principle} \cite{trybulec1990pigeon}, after arbitrary $N$ times of executions from the nodes indicated by the $N$ values within $O$, two identical nodes will be visited, resulting in the same execution path sequence with a \textit{period interval}. Consequently, there exist at most $N$ \textit{period intervals} in the $O$.
\end{proof}
Note that from Algorithm \ref{alg:oscillatory}, the seeking of the oscillatory interval is a general process oriented toward SCCs and is unrelated to whether the SCC contains \textit{connected cycles} referred in the \S\ref{sec:motivation}, allowing LoopSCC's loop summarization to cover connected cycles.

\textbf{Example.} We use a typical 2-order SCC loop in Fig. \ref{2orderscc}(a) to describe the summarization based on the oscillatory interval $O$. In the loop, \texttt{SPath A} and \texttt{SPath B} have J-Interval of $[3,5)$ and $[5,10)$, respectively, and all the rest value intervals belong to I-Intervals. To achieve $O$, the initial value interval $A$ is set as all the J-Intervals, i.e., $A= [3,5) \cup [5,10)= [3,10)$. After SPath operation, the value interval $B$ is $[0,7)$. At this time, $B \nsubseteq A$ holds, so $A$ is extended to $A \cup B = [0,10)$ and continues to be operated. After the second operation, the value interval $B$ is still $[0,7)$ and does not satisfy $B \nsubseteq A$. Then we get $O= A= [0,10)$. 

\begin{figure}[ht]
\begin{minipage}[b]{.3\textwidth}
\begin{lstlisting}
while i < 100:
    if x < 5:
        x = x + 2
        i = i + 3
    else:
        x = x - 5
        i = i + 4
\end{lstlisting}
\subcaption{}
\end{minipage}
\hspace{5em}
\begin{minipage}[b]{.3\textwidth}
\begin{lstlisting}
while i < 100:
    if x < 4:
        x = x + 2
        i = i + 3
    else:
        x = 2 * x - 8
        i = i + 4
\end{lstlisting}
\subcaption{}
\end{minipage}
\caption{Examples of the loop contains 2-order SCC}
\label{2orderscc}
\end{figure}

After that, LoopSCC summarizes the loop upon the computed $O$. For the subinterval $[0,7)$, since its operation is $(x+2) \bmod 7$, the closed-form expression can be derived as $(x+2*N) \bmod 7$.
When x is an integer, there are only 7 distinct values within the interval $[0,7)$. 
So we can also iterate from 0 to compute the value after SCC execution, finding that these 7 values have the same periodicity.
Thus we can record the execution sequence as the closed-form expression of the interval $[0,7)$.
For the subinterval $[7,10)$, its execution jumps to the $[0,7)$ interval, so the value changes before entering the periodic interval are summarized. Essentially, assuming there are $T$ values within $O$, the time complexity and space complexity of the loop summary are both $O(T)$.

When considering the outer of $O$, e.g., $(-\infty, 0)$, the value interval belongs to I-Interval of $A$. At this point, the summarization is to calculate the closed-form expression of the post-variable $x$, i.e., $x=x_0+2*N$, to make the $x$ enter $O$, i.e., $x \in [0,10)$ or to enter the T-Interval, i.e., $i<100$.

\section{Evaluation}
LoopSCC is implemented by the \texttt{python 3.12.0}, which is equipped with \texttt{z3 solver 4.13.0} for condition solving and \texttt{sympy 1.21.1} for interval computation.
We evaluate the effectiveness of LoopSCC by answering the following research questions (RQs):
\begin{itemize}
    \item RQ1: How accurate is LoopSCC in loop interpretation?
    \item RQ2: What is the effectiveness of LoopSCC in supporting practical software verification?
    \item RQ3: Can LoopSCC enhance the performance of symbolic execution?
    \item RQ4: What is the scalability of LoopSCC in real-world programs?
\end{itemize}

\subsection{RQ1: Accuracy of LoopSCC.}

\subsubsection{Benchmark}
We evaluate the summarization accuracy of LoopSCC on the \textit{C4B} \cite{carbonneaux2015compositional}, a public benchmark of loop programs that is commonly used in previous related works \cite{kincaid2017compositional} \cite{ngo2018bounded}. The \textit{C4B} has collected 36 challenging and representative loop programs from open-source software and literature. We removed loops with memory-related operations, resulting in a final set of $30$ test cases.

\subsubsection{Experimental Settings}
For each test case, we randomly generated 1k inputs and executed the original loop to obtain 1k actual outputs first. Then, we applied the comparative loop interpretation methods to compute and match the actual results, obtaining the interpretation accuracy. 
The interpretation needs to meet two restrictions to exclude the exhaustive method: \ding{182} The time for interpretation computation was limited to 5 minutes; if exceeded, it was treated as an interpretation failure. \ding{183} The generated random input is limited to a range, ensuring that the total number of iterations for a loop case is close to one million.

\subsubsection{Baselines}
Six state-of-the-art loop analysis methods have been implemented as baselines for the comparison experiments, including four abstract interpretation methods (i.e., CBMC, \\ CPAchecker, VeriAbsL and ICRA), which under-approximate or over-approximate loop behaviors, and two advanced loop summarization (concrete interpretation) methods (i.e., Proteus and WSummarizer), which preserve the original program semantics.
\begin{itemize}
    \item \textit{CBMC} \cite{ckl2004} is a classic Bounded Model Checking (BMC) \cite{biere1999symbolic} tool capable of loop analysis, which ranked first in at least one category of \textit{SV-COMP} in 2014, 2015, and 2017.
    \item \textit{CPAchecker} \cite{beyer2011cpachecker} is a well-known configurable software verification tool capable of loop analysis, which won the \textit{FalsificationOverall} category and a silver medal in the \textit{Overall} category at the \textit{SV-COMP 2024} competition.
    \item \textit{VeriAbsL} \cite{darke2023veriabsl} is a strategy prediction-based reachability verifier, which performed best in the \textit{SV-COMP 2024 ReachSafety} competition.
    \item \textit{ICRA} \cite{kincaid2017compositional} is an abstract interpretation method that combines compositional recurrence analysis with symbolic analysis for program interpretation.
    \item \textit{Proteus} \cite{xie2016proteus} is the state-of-the-art loop summarization method capable of summarizing multiple-path loops with simple cycles.
    \item \textit{WSummarizer} \cite{blicha2022summarization} is an advanced loop summarization method which is able to handle the connected cycle issue figured in the motivation examples (\S\ref{sec:motivation}).
\end{itemize}

\subsubsection{Results}

\begin{table}[ht]
  \caption{The Comparison Accuracy of Loop Summarization (\%)}
  \label{tab:interpre}
  \begin{tabular}{|c|c|c|c|c|c|c|c|} \hline
    Loop ID & CBMC & CPAchecker & ICRA & VeriAbsL & WSummarizer & Proteus & LoopSCC \\ \hline \hline
        1 & 21.7 & 10.0 & 96.0 & 9.9 & 98.8 & 100 & 100 \\ 
        2 & 21.7 & 9.5 & 43.2 & 9.5 & 0 & 0 & 100 \\ 
        3 & 24.0 & 9.4 & 9.9 & 9.4 & 0 & 0 & 100 \\ 
        4 & 24.2 & 10.8 & 39.2 & 10.7 & 82.9 & 0 & 100 \\ 
        5 & 75.3 & 52.3 & 100 & 52.3 & 81.6 & 100 & 100 \\ 
        6 & 100 & 100 & 100 & 100 & 38.2 & 100 & 100 \\ 
        7 & 55.3 & 100 & 100 & 100 & 0 & 100 & 100 \\ 
        8 & 74.2 & 100 & 100 & 100 & 0 & 100 & 100 \\ 
        9 & 72.1 & 100 & 100 & 49.1 & 0 & 100 & 100 \\ 
        10 & 69.0 & 100 & 100 & 73.6 & 80.4 & 100 & 100 \\ 
        11 & 64.5 & 78.0 & 50.3 & 78.0 & 94.7 & 100 & 100 \\ 
        12 & 68.3 & 100 & 100 & 87.4 & 0 & 100 & 100 \\ 
        13 & 77.0 & 100 & 100 & 100 & 0 & 100 & 100 \\ 
        14 & 95.3 & 100 & 100 & 100 & 100 & 100 & 100 \\ 
        15 & 67.5 & 100 & 100 & 100 & 100 & 100 & 100 \\ 
        16 & 74.5 & 100 & 100 & 100 & 100 & 100 & 100 \\ 
        17 & 46.1 & 25.4 & 100 & 25.4 & 86.7 & 100 & 100 \\ 
        18 & 58.1 & 47.2 & 100 & 100 & 0 & 100 & 100 \\ 
        19 & 82.8 & 12.1 & 100 & 100 & 100 & 100 & 100 \\ 
        20 & 75.9 & 100 & 100 & 100 & 100 & 100 & 100 \\ 
        21 & 73.1 & 100 & 100 & 77.9 & 100 & 100 & 100 \\ 
        22 & 100 & 100 & 100 & 100 & 0 & 100 & 100 \\ 
        23 & 97.9 & 95.0 & 100 & 100 & 0 & 0 & 100 \\ 
        24 & 1.9 & 2.8 & 4.4 & 0 & 0 & 0 & 100 \\ 
        25 & 83.8 & 100 & 100 & 100 & 100 & 100 & 100 \\ 
        26 & 66.2 & 51.1 & 100 & 100 & 100 & 100 & 100 \\ 
        27 & 55.5 & 51.9 & 51.9 & 51.9 & 0 & 0 & 0 \\ 
        28 & 3.1 & 90.0 & 100 & 100 & 0 & 100 & 100 \\ 
        29 & 44.9 & 27.8 & 0 & 27.8 & 0 & 0 & 0 \\ 
        30 & 96.0 & 100 & 100 & 52.1 & 52.7 & 100 & 100 \\ \hline \hline
        \#S & 30   & 30   & 29   & 29   & 16   & 23  & 28 \\
        \#H & 4    & 17   & 23   & 15   & 10   & 23  & 28 \\
        \#A & 62.3 & 69.1 & 86.0 & 72.9 & 88.5 & 100 & 100 \\ \hline
\end{tabular}
\\
\#S represents the number of test cases that can be successfully executed (accuracy is not 0\%); 
\#H  represents the number of test cases with high summarization precision (accuracy is higher than 90\%); 
\#A represents the average accuracy in successfully executed cases.
\end{table}

From Table \ref{tab:interpre}, compared to abstract interpretation tools, i.e., CBMC, CPAchecher, ICRA and VeriAbsL, LoopSCC provides a more precise interpretation of loops, achieving a  100\% average accuracy. Compared to advanced loop summarization tools, i.e., WSummarizer and Proteus, LoopSCC can handle a wider range of loop types. Out of 30 test cases, LoopSCC successfully handled 28, which is more than Proteus and WSummarizer by 5 and 12, respectively.
Especially, LoopSCC can correctly summarize multi-path loops with \textit{inductiveness traps} and \textit{connected cycles}, as demonstrated by the case \texttt{custom\_4} (see Fig. \ref{custom4}), where Proteus and WSummarizer fail to handle them. 
The interpretation tools handle a wider variety of loop types, but lose partial program semantics. For instance, from the success number, the worst-performing interpretation tools (ICRA and VeriAbsL) are unable to handle only one test case, which is higher than the optimal loop summarization method. The execution success number reflects application generalization, indicating that interpretation tools generally cover a broader range of case types compared to summarization methods.
However, from the test accuracy, the highest average accuracy achieved by the interpretation tools is 86.0\% which is even lower than the worst summarization method (WSummarizer). By further analysis, we found that running WSummarizer with the same input twice is possible to result in different output values. This indicates that WSummarizer lacks appropriate constraint expressions in its implementation, which leads to inadequate restrictions on variable values.

\begin{wrapfigure}{r}{0.35\textwidth}
  \footnotesize
  \fbox{\begin{minipage}{0.3\textwidth}
   \texttt{if y >= 100:}\\
   \hspace*{0.5cm}\texttt{n = floor((y - 100) / 100)}\\
   \hspace*{0.5cm}\texttt{y = y - 100 * n}\\
  \end{minipage}}
\caption{Linear Program of Inner Loop in \texttt{t27}.}\label{fig:linear-program}
\end{wrapfigure}

LoopSCC only failed to execute in 2 cases, which also caused failures in Proteus and WSummarizer.
To deep analyze the 2 cases, one is a nested loop \texttt{t27} (see Fig. \ref{t27}) with $y=y-100$.
After eliminating inner loop structures, LoopSCC produces a linear program (see Fig. \ref{fig:linear-program}) that includes the \texttt{floor} computation, making it unable to be summarized. The other case contains a \textit{multivariate recursion} operation in the loop, i.e., $x_n=y_{n-1}$, $y_n=x_{n-1}-1$, which has not been realized in the LoopSCC.

\begin{figure}[htbp]
\begin{minipage}[b]{.3\textwidth}
\begin{lstlisting}
while i < 100:
    if flag:
        if x > 5:
            x -= 5
            i += 3
        else:
            x += 2
            i += 7
    else:
        x -= 7
        flag = 1
\end{lstlisting}
\subcaption{custom\_4 (ID: 4)}\label{custom4}
\end{minipage}
\hfill
\begin{minipage}[b]{.3\textwidth}
\begin{lstlisting}
while n < 0:
    n = n + 1
    y = y + 1000
    while y >= 100:
        y = y - 100
\end{lstlisting}
\subcaption{t27 (ID: 27)}\label{t27}
\end{minipage}
\hfill
\begin{minipage}[b]{.3\textwidth}
\begin{lstlisting}
while x > 0:
    x = x - 1
    t = x
    x = y
    y = t
\end{lstlisting}
\subcaption{t30 (ID: 29)}\label{t30}
\end{minipage}
\caption{Representive Test Cases in RQ1.}
\label{experiment1}
\end{figure}

\subsection{RQ2: Support of Software Verification}

\subsubsection{Benchmark}
The benchmark is collected from \textit{SV-COMP 2024} \footnote{SV-COMP 2024: \url{https://gitlab.com/sosy-lab/benchmarking/sv-benchmarks/-/blob/svcomp24-final}}, one of the most well-known competitions in the field of software verification. Aiming at loop summarization, we extracted 6  programs branches related to loops (i.e., \texttt{loop-acceleration}, \texttt{loop-crafted}, \texttt{loop-new},\texttt{loops}, \texttt{loop-zilu} and \texttt{loop-simple}) from the original benchmark. 
We removed loops with memory-related operations, resulting in a final total of 107 test cases for evaluation.

\subsubsection{Experimental Settings}
Software verification aims to ensure that software systems function as intended and meet specified requirements. In practice, the software verification employs specific \texttt{assert} statements to determine whether the target property within certain program path holds.
To evaluate the effectiveness of LoopSCC in supporting the software verification, we have added new functions to LoopSCC, enabling it to check both assertions within the loops and those outside the loops:
\ding{182} For assertions within the loops, LoopSCC identifies all paths containing properties to be verified, and converts the assert conditions' negation into expressions of the $sp.Pre$ for these paths through variable substitution. 
After that, LoopSCC computes the loop summarization.
For any assertion, if there exist values that satisfy both the loop summarization and the negated condition of the assertion, then this assertion is incorrect.
\ding{183} For assertions outside the loops, we directly append the negation of the assert condition to the end of the loop summarization and check whether the new summarization can be satisfied to determine if the property holds. 
Specifically, for cases where the loop condition is uncertain (i.e., the condition value is set as \texttt{nondet}), LoopSCC introduces two variables $x$ and $y$ to represent arbitrary iteration number and sets an explicit loop condition $x<y$ to replace this uncertainty.
This behavior does not change the semantics.

\subsubsection{Baselines}
In the 6 baselines figured in RQ1, only CBMC, CPAchecker, VeriAbsL and Proteus can be successfully applied in the software verification task and achieve promising results. Especially, VeriAbsL won the Top 1 in the competition of \textit{SV-COMP 2024 ReachSafety}. 
Therefore, the 4 methods, i.e., CBMC, CPAchecker, VeriAbsL and Proteus, are implemented as comparative baselines in this experiment.

\subsubsection{Results}

\begin{figure}[ht]
\centering
\begin{tikzpicture}[scale=0.8]
\begin{axis}[
bar width=15pt,
    nodes near coords,
    symbolic x coords={CBMC, CPAchecker, VeriAbsL, Proteus, LoopSCC},
    xtick=data,
    ymin=0, ymax=100,
    ylabel={Accuracy (\%)},
    xlabel={Comparative Tools},
    x tick label style={font=\footnotesize},
]
\addplot [ybar, fill=blue!60, postaction={pattern=north east lines}]
	coordinates {(CBMC,19.6) (CPAchecker,60.8)
		 (VeriAbsL,75.7) (Proteus,72.0) (LoopSCC,86.0)};
\end{axis}
\end{tikzpicture}
\caption{Accuracy Comparison in Software Verification.}
\label{fig:experiment2}
\end{figure}
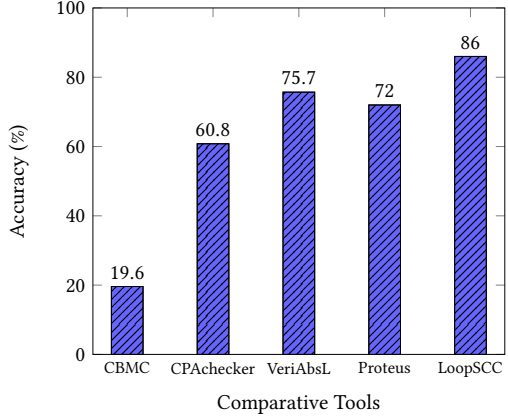

From Fig. \ref{fig:experiment2}, LoopSCC successfully verified 86\% of the test cases, outperforming the best-performing baseline, i.e., VeriAbsL, by 10.3\%.
From further analysis, we found that the verification errors in LoopSCC are primarily attributed to the incomplete implementation of complex operations, resulting in the closed-form expressions being hard to represent. For instance, the test cases \texttt{phases\_2-1}, \texttt{phases\_2-2} and \texttt{underapprox} involve in operations of \textit{square} or \textit{division}. Notably, Proteus, serving as a state-of-the-art summarization method, performs slightly worse than VeriAbsL. This is because VeriAbsL is a hybrid approach that integrates multiple advanced summarization methods and utilizes machine learning to select the most suitable one.

\subsection{RQ3: Enhancement of Symbolic Execution}

\subsubsection{Experimental Settings}
LoopSCC uses condition expressions derived from flow-based SPath analysis to compute loop summarization, which can be integrated with existing program analysis techniques to enhance its analysis performance. 
To evaluate the effectiveness of LoopSCC in such application scenarios, we take \textit{symbolic execution}, a well-known software analysis technique as a typical target to enhance. 
To this end, we take five well-known algorithm programs as the test cases, i.e., \textit{integer division}, \textit{factorization}, \textit{GCD}, \textit{LCM} and \textit{square root} \cite{de2016polynomial}. 
A common scenario for symbolic execution is verifying feasible inputs that satisfy data integrity through reverse condition-based reasoning.
In this experiment, we will simulate this scenario.

We then use \textit{KLEE} \cite{cadar2008klee}, an advanced symbolic execution tool, to run the test cases and get the execution results and time costs. To ensure that \textit{KLEE} executes without trapping in an infinite loop, we use the \verb|klee_assume| command to limit the range of values for the input variables of each test case. In detail, we restrict the range of values to $[0, 10k]$ for single variables and to $[0, 100]$ for dual variables.

\subsubsection{Results}

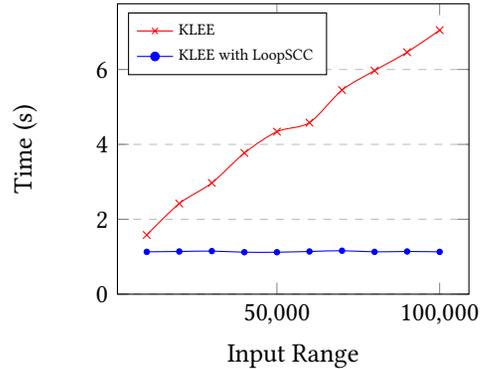
\begin{figure}[ht]
\begin{minipage}[c]{.5\textwidth}
\centering
\begin{tabular}{|c|c|c|}
\hline
Tools    & KLEE    & KLEE+LoopSCC \\
\hline \hline
cohendiv & 14 m 56 s  & 1.15 s         \\
fermat   & 04 h 49 m  & 1.32 s         \\
GCD      & 19 m 56 s  & 1.24 s         \\
LCM      & 26 m 16 s  & 1.31 s         \\
sqrt     & 1.79 s     & 1.13 s         \\
\hline
\end{tabular}
\subcaption{Time Costs with/without LoopSCC.}
\end{minipage}
\hfill
\begin{minipage}[c]{.45\textwidth}
\centering
\begin{tikzpicture}
\begin{axis}[
    width=\textwidth,
    scaled ticks=false,
    tick label style={/pgf/number format/fixed},
    xtick={0,50000,100000},
    ymin=0,
    legend style={
        font=\tiny
    },
    legend pos=north west,
    legend image post style={mark size=2pt, scale=0.8},
    legend cell align={left},
    ymajorgrids=true,
    grid style=dashed,
    ylabel={Time (s)},
    xlabel={Input Range},
]
\addplot [smooth,color=red,mark=x,]
	coordinates {(10000,1.58) (20000,2.42)
		 (30000,2.97) (40000,3.77) (50000,4.34)
      (60000,4.58) (70000,5.45) (80000,5.97)
      (90000,6.46) (100000,7.05)};
\addplot [smooth,color=blue,mark=*,mark size=1pt]
	coordinates {(10000,1.13) (20000,1.14)
		 (30000,1.15) (40000,1.12) (50000,1.12)
      (60000,1.14) (70000,1.16) (80000,1.13)
      (90000,1.14) (100000,1.13)};
\legend{KLEE,KLEE with LoopSCC}
\end{axis}
\end{tikzpicture}
\subcaption{Time Costs and Input Ranges in \texttt{sqrt}.}
\end{minipage}
\caption{Symbolic Execution Supported by LoopSCC.}
\label{fig:experiment3}
\end{figure}

From Fig. \ref{fig:experiment3}(a), LoopSCC presents significant loop acceleration capabilities, reducing symbolic execution time from initially several minutes or even hours to around 1 second. Additionally, it can be observed that regardless of the original execution time, the analysis time for symbolic execution with LoopSCC is roughly consistent. This is because loop summarization addresses the most time-consuming part of the program execution, i.e., loop structure, making the execution time of the remaining code negligible.
Furthermore, we found that for the original symbolic execution, as the range of input variable values increases, the execution time grows significantly. 
Fig. \ref{fig:experiment3}(b) demonstrates the relationship between time cost and input range in \texttt{sqrt} algorithm.  It can be seen that the loop summary computed by LoopSCC are unaffected by the number of iterations caused by the input range, maintaining relatively stable execution times.

\subsection{RQ4: Scalability on Real-World Loops}
\subsubsection{Experimental Settings}
To explore the scalability of LoopSCC in handling real-world loops, we conduct a systematic investigation on three open-source utility programs with large-scale code, i.e.,  \texttt{Bitcoin}, \texttt{musl} and \texttt{Z3}. 
We first compile the target program into an \texttt{LLVM IR} file and use the \texttt{LoopAnalysisManager} analyzer to retrieve all loops within the programs.
Then all the code operations are identified to remove the loops with memory-related operations which is out of our scope. At last, a total of 7,406 effective loops are collected.
Afterwards, we use the \verb|LoopSimplify| \texttt{pass} to simplify these loops to obtain all paths within the loops. This allows us to check the jumps between paths to determine the order of SCCs contained in each loop.
For the loops with high-order SCCs, we continue to explore the existence of oscillatory intervals within these SCCs. In this process, the \verb|reduce_inequalities| function of \textit{sympy} library is exploited to compute the oscillatory intervals following the method described in \S\ref{subsec:high-order}.
In handling the oscillatory intervals with finite values, the average time cost of LoopSCC achieves a low overhead, where the average time cost is less than 1 second and the memory usage is less than 300 KB.

\begin{table}[ht]
  \caption{Loops in Real World.}
  \label{tab:investigate}
  \begin{tabular}{|c|c|c|c|c|} \hline
  Program & Total & Without High-order SCC & With High-order SCCs & \#FOIs \\ \hline \hline
  Bitcoin & 1437 & 816(56.8\%) & 621(43.2\%) & 548(88.2\%) \\
  musl & 241 & 165(68.5\%) & 76(31.5\%) & 69(90.8\%) \\
  Z3 & 5728 & 3706(64.7\%) & 2022(35.3\%) & 1903(94.1\%) \\ \hline \hline
  Total & 7406 & 4687(63.3\%) & 2719(36.7\%) & 2520(92.7\%) \\ \hline
  \end{tabular}
  \\ \#FOI: finite oscillatory interval where the number of values are less than 1 million.
\end{table}

\subsubsection{Results}
From Table \ref{tab:investigate}, most loops (63.3\%) execute without any high-order SCC, as many of the loops are simply single-branch for loops. Among the 36.7\% of loops that contain high-order SCCs, 92.7\% have a finite oscillatory interval and can be summarized by LoopSCC.
However, 1,169 of these programs contain complex nested loops, making it challenging to directly derive closed-form expressions for the recurrence operations. Ultimately, LoopSCC can successfully summarize 6,038 (81.5\% ) of the collected loops, demonstrating excellent scalability in real-world programs.

\section{Limitations}\label{sec:threats}

\subsection{Execution without Periodicity}
Our method requires execution periodicity in the summarization of high-order SCCs.
The periodic execution exists only when the number of values within oscillatory interval are finite. 
However, when the oscillatory interval is infinite, the loop execution may not present excepted periodicity, 
making LoopSCC invalid.
For example, the loop shown in Fig. \ref{2orderscc}(b) contains a 2-order SCC.
If the type of variable $x$ is real number, the oscillatory interval becomes infinite, which prevents LoopSCC from summarizing the loops with high-order SCCs.

\subsection{Inductiveness Trap in Nesting Eliminating}
LoopSCC utilizes an inside-out transformation to convert the nested loops into non-nested loops.
Note that the loop summary typically contains iteration variables, which is possible to complicated the operation expressions of summarization even if the loop operations are quite simple. 
Then LoopSCC will suffer from \textit{inductiveness trap} problem when transforming nested loops as traditional methods.
As such, after eliminating the inner loop, the operations of outer loop are hard to produce closed-form expressions, leading to the failure of summarization. 
As a result, LoopSCC can only summarize a subset of nested loops.
A possible solution is to use program synthesis methods to synthesize the summaries of the inner loops with the outer loop program, which is our focus in the future work. 

\section{Related Work}\label{sec:rw}

\subsection{Loop Summarization}
Loop summarization aims to generate a set of constraint expressions to represent the mapping between loop inputs and outputs, which can be used to directly replace loops in program analysis. 
Techniques for loop summarization can broadly be categorized into two types: 
those based on symbolic execution \cite{saxena2009loop} \cite{godefroid2011automatic} \cite{strejvcek2012abstracting} \cite{kapus2019computing} 
and those based on data-flow analysis \cite{xie2016proteus} \cite{blicha2022summarization} \cite{xie2015slooper}.
Currently, most of the loop summarization efforts target at single-branch loops or multi-branch loops without path jump, since complex structures with interconnected branches are hard to analyze. For instance, LESE\cite{saxena2009loop}, APLS\cite{godefroid2011automatic} and APC\cite{strejvcek2012abstracting} are such summarization efforts which derive loop summaries by detecting linear relations among induction variables during symbolic execution.
Some works attempt to address the multi-branch challenge by analyzing the execution patterns of loop paths.
Proteus \cite{xie2016proteus} analyzes the jump relations between paths and introduces a PDA automaton model. When the jumps between multiple paths produce a simple cycle, Proteus propagates variable expressions along the simple cycle to derive the symbolic expression for the entire cycle. Such design enable Proteus to summarize multi-branch loops that contain simple cycles.
WSummerizar\cite{blicha2022summarization} combines \textit{loop unwinding} with the Proteus method, enabling it to summarize specific multi-branch loops that contain \textit{connected cycles}.
Apart from that, other efforts aim to handle specific operations within the loop summarization.
Xie et al. \cite{xie2015slooper} adapted the Proteus method for string operations and proposed the S-Looper approach. Similarly, Kapus et al. \cite{kapus2019computing} developed a loop summarization method focused on string operations, using CounterExample-Guided Inductive Synthesis (CEGIS) \cite{solar2006combinatorial}. 
These methods normally translate a loop program into a simpler loop consisting solely of primitive operations, e.g., pointer increment, and standard string operations,  e.g., string copy. 
\textit{Compared to traditional works, LoopSCC employs improved SCC-based flow analysis, to summarize diverse types of multi-branch loops, alleviating the challenge of connected cycles and inductiveness traps.}

\subsection{Program Abstraction}
Program abstraction aims to describe a target structure within the program through specific features or formulas. 
For instance, \textit{invariant generation}, serving as a loop analysis approach, is commonly employed to find correct constraint relation of loop, which can be categorized into two types, i.e., white-box and black-box. 
For white-box based invariant generation, representative works include counter-example guided abstraction refinement (GEGAR \cite{clarke2000counterexample}) and predicate abstraction \cite{ball2001automatic} \cite{flanagan2002predicate} \cite{ball2002slam} \cite{henzinger2002lazy} , 
Craig interpolation \cite{mcmillan2006lazy} \cite{mcmillan2003interpolation}, 
and IC3/PDR based abstraction \cite{bradley2011sat}.
Black-box based techniques include advanced machine learning based feature extraction for target loop structures \cite{garg2014ice} \cite{si2020code2inv} \cite{pmlr-v202-pei23a}.
Compositional analysis (CRA) \cite{Farzan_Kincaid_2015} serving as another program abstraction method, has gained increasing focuses recently, which adopts a \textit{divide-and-conquer} strategy, to synthesize the abstractions from each sub-programs.
There are different variations of CRA methods, such as  predicate abstraction based 
program composition \cite{kroening2008loop} and  numerical abstraction based approximation for program behavior \cite{muller2004precise}\cite{monniaux2009automatic}.  
Kincaid et al. \cite{kincaid2017compositional} prososed ICRA  method, an extension of CRA, which adapts context-sensitive inter-procedural programs. Specifically,  ICRA integrates the Newton iteration method \cite{esparza2010newtonian} to handle non-linear variable operations. \textit{Different from existing program abstraction efforts, LoopSCC aims to provide expression-based summaries that precisely interpret loops, which supporting further program analysis.}

\section{Conclusion}
This paper proposes an SCC-based flow analysis method for complex multi-branch loop structures summarization. In this process, specific branch-based graph is extracted to facilitate recurrence analysis for condition relationships of loop iterations. Especially, for high-order SCCs within the target loops,  oscillatory interval is proposed and quantitatively analyzed to transform indeterminate iteration period into a determinate one. Thus, complex loop structures with connected cycle and inductiveness trap can be alleviated.
Compared to state-of-the-art methods, the proposed approach covers a wider range of loop types and achieves higher summarization accuracy. Additionally, it presents outstanding scalability to real-world loop programs.


\bibliographystyle{ACM-Reference-Format}
\bibliography{main}


\end{document}